\documentclass[11pt]{article}
\usepackage[letterpaper,hmargin=1in,vmargin=1.25in]{geometry}
\usepackage[english]{babel}
\usepackage[utf8]{inputenc}
\usepackage[T1]{fontenc}
\usepackage{amsmath}
\usepackage{amsfonts}
\usepackage{amssymb}
\usepackage{enumitem}
\usepackage{framed}
\usepackage{dsfont} 
\usepackage[standard]{ntheorem}
\usepackage[dvips]{graphics}
\usepackage{epsfig}
\usepackage{epsfig,psfrag}
\usepackage{subfigure}
\usepackage{color}
\usepackage{calc}
\usepackage{listings}
\usepackage{url}
\usepackage{makeidx}
\usepackage{stmaryrd}
\usepackage{alltt}
\usepackage{tabls}
\usepackage{textcomp}
\usepackage{slashbox}
\usepackage{times}

\lstset{
basicstyle=\small,          
keywordstyle=\color{black}\bfseries\underbar,
identifierstyle=,           
commentstyle=\color{white}, 
stringstyle=\ttfamily,      
extendedchars = true,
showstringspaces=false}     

\makeindex

\setcounter{secnumdepth}{4}

\title{A topological chaos framework for hash functions}

\begin{document}

\title{A topological chaos framework for hash functions}
\date{ }
\author{}
\maketitle

\begin{abstract}
This paper presents a new procedure of generating hash functions which can
be evaluated using some mathematical tools. This procedure is based on
discrete chaotic iterations.

First, it is mathematically proven, that these discrete chaotic iterations
can be considered as a \linebreak particular case of topological chaos. Then, the
process of generating hash function based on the \linebreak topological chaos  is
detailed. Finally it is shown how some tools coming from the domain of
\linebreak topological chaos can be used to measure quantitatively and qualitatively
some desirable properties for hash functions. An illustration example is
detailed in order to show how one can create hash functions using our
theoretical study.
\end{abstract}

\setcounter{secnumdepth}{3}

\emph{Key-words : Discrete chaotic iterations. Topological chaos. Hash
function}

\section{Introduction}

Hash functions, such as MD5 or SHA-256, can be described by discrete
iterations on a finite set. In this paper, the elements of this finite set
are called cells. These cells represent the blocks of the text to which the
hash function will be applied. The origin of this study goes up with the
idea of using the concept of discrete \textit{chaotic} iterations for
generating new hash functions. This idea gave then quickly rise to the
question of knowing if discrete chaotic iterations really generate chaos.

This article presents the research results related to this question.%
\newline
First, we prove that under some conditions, discrete chaotic iterations
produce chaos, precisely, they produce topological chaos in the sense of
Devaney. This topological chaos is a rigorous
framework well studied in the field of mathematical theory of chaos.
Thanks to this result we give a process of generating hash functions.

Behind the theoretical interest connecting the field of the chaotic discrete
iterations and the one of topological chaos, our study gives a framework
making it possible to create hash functions that can be mathematically
evaluated and compared.\newline
Indeed, some required qualities for hash functions such as strong
sensitivity to the original text, resistance to collisions and unpredictability
 can be mathematically described by notions
from the theory of topological chaos, namely, sensitivity, transitivity,
entropy and expansivity. These concepts are approached but non deepened in
this article. More detailed studies will be carried out in forthcoming
articles.

This study is the first of a series we intend to carry out. We think that
the mathematical framework in which we are placed offers interesting new
tools allowing the conception, the comparison and the evaluation of new
methods of encryption in general, not only hash functions.

The rest of the paper is organized as follows.\newline
The first next section is devoted to some recalls on two distinct domains, the
domain of topological chaos and the domain of discrete chaotic iterations.%
\newline
Third and fourth sections constitute the theoretical study of the present
paper. Section III defines the \linebreak topological framework in which we are placed
while section IV shows that the chaotic iterations produce a topological
chaos.\newline
The following section details, using an illustration example, the procedure
to build hash functions based on our theoretical results. Section VI explains
how quantitative measures could be obtained for hash functions. The paper ends
by some discussions and future work.

\section{Basic recalls}

This section is devoted to basic definitions and terminologies in the field
of topological chaos and in the one of chaotic iterations.

\subsection{Devaney's chaotic dynamical systems}

Consider a metric space $(\mathcal{X},d)$, and a continuous function $f:%
\mathcal{X}\longrightarrow \mathcal{X}$.

\begin{definition}
$f$ is said to be \emph{topologically transitive} if, for any pair of open
sets $U,V \subset \mathcal{X}$, there exists $k>0$ such that $f^k(U) \cap V
\neq \varnothing$.
\end{definition}

\begin{definition}
An element (a point) $x$ is a periodic element (point) for $f$ of period $%
n\in \mathds{N},$ if $f^{n}(x)=x$. The set of periodic points of $f$ is denoted $%
Per(f).$
\end{definition}

\begin{definition}
$(\mathcal{X},f)$ is said to be \emph{regular} if the set of periodic points
is dense in $\mathcal{X}$, 
\begin{equation*}
\forall x\in \mathcal{X},\forall \varepsilon >0,\exists p\in
Per(f),d(x,p)\leqslant \varepsilon .
\end{equation*}
\end{definition}

\begin{definition}
\label{sensitivity} $f$ has \emph{sensitive dependence on initial conditions}
if there exists $\delta >0$ such that, for any $x\in \mathcal{X}$ and any
neighborhood $V$ of $x$, there exists $y\in V$ and $n\geqslant 0$ such that $%
|f^{n}(x)-f^{n}(y)|>\delta $.

$\delta$ is called the \emph{constant of sensitivity} of $f$.
\end{definition}

Let us now recall the definition of a chaotic topological system, in the
sense of Devaney~\cite{Dev89} :

\begin{definition}
$f:\mathcal{X}\longrightarrow \mathcal{X}$ is said to be \emph{chaotic} on $%
\mathcal{X}$ if,

\begin{enumerate}
\item $f$ has sensitive dependence on initial conditions,

\item $f$ is topologically transitive,

\item $(\mathcal{X},f)$ is regular.
\end{enumerate}
\end{definition}

Therefore, quoting Robert Devaney: ``A chaotic map possesses three
ingredients: unpredictability, indecomposability, and an element of
regularity. A chaotic system is unpredictable because of the sensitive
dependence on initial conditions. It cannot be broken down or decomposed
into two subsystems, because of topological transitivity. And, in the midst
of this random behavior, we nevertheless have an element of regularity,
namely the periodic points which are dense.''

Banks \emph{et al.} proved in ~\cite{Banks92} that sensitive dependence is a
consequence of being regular and topologically transitive.

\subsection{Chaotic iterations}

In the sequel $s[n]$ denotes the $n-$th term of a sequence $s$, $V_{i}$
denotes the $i-$th component of a vector $V$, and $f^{k}$ denotes the $k-$th
composition of a function $f$. Finally, the following notation is used: $%
\llbracket1;N\rrbracket=\{1,2,\hdots,N\}$.

\medskip

Let us consider a \emph{system} of a finite number $\mathsf{N}$ of \emph{%
cells} so that each cell has a boolean \emph{state}. Then a sequence of
length $\mathsf{N}$ of boolean states of the cells corresponds to a
particular \emph{state of the system}.

A \emph{strategy} corresponds to a sequence of $\llbracket1;\mathsf{N}%
\rrbracket$. The set of all strategies is denoted by $\mathcal{S}.$

\begin{definition}
Let $S\in \mathcal{S}$. The \emph{shift} function is defined by%
\begin{equation*}
\begin{array}{lclc}
\sigma : & \mathcal{S} & \longrightarrow & \mathcal{S} \\ 
& (S[n])_{n\in \mathds{N}} & \longmapsto & (S[n+1])_{n\in \mathds{N}}%
\end{array}%
\end{equation*}%
\noindent and the \emph{initial function} is the map which associates to a
sequence, its first term 
\begin{equation*}
\begin{array}{lclc}
i: & \mathcal{S} & \longrightarrow & \llbracket1;\mathsf{N}\rrbracket \\ 
& (S[n])_{n\in \mathds{N}} & \longmapsto & S[0].%
\end{array}%
\end{equation*}
\end{definition}

$\mathds{B}$ denoting $\{0,1\}$, let $f:\mathds{B}^{\mathsf{N}}\longrightarrow \mathds{B}^{\mathsf{N}}$ and $%
S\in \mathcal{S}$ be a strategy. Let us consider the following so called 
\emph{chaotic iterations} (see \cite{Robert1986} for the general definition
of such iterations).

\begin{equation}
\left\{ 
\begin{array}{l}
x[0]\in \mathds{B}^{\mathsf{N}} \\ 
\forall n\in \mathds{N}^{\ast },\forall i\in \llbracket1;\mathsf{N}\rrbracket%
,x[n]_{i}=\left\{ 
\begin{array}{ll}
x[n-1]_{i} & \text{ if }S[n]\neq i \\ 
f(x[n])_{S[n]} & \text{ if }S[n]=i.%
\end{array}%
\right.%
\end{array}%
\right.  \label{chaotic iterations}
\end{equation}

In other words, at the $n-$th iteration, only the $S[n]-$th cell is
\textquotedblleft iterated\textquotedblright . Note that in a more general
formulation, $f(x[n])_{S[n]}$ can be replaced by $f(x[k])_{S[n]}$, where $%
k\leqslant n$, modelizing for example delay transmission (see \emph{e.g.}  \cite{Bahi2000}).

\section{A topological approach of chaotic iterations}

\subsection{The new topological space}

In this section we will put our study in a topological context by defining a
suitable set and a suitable distance.

\subsubsection{Defining the iteration function and the phase space}

\label{Defining}

Let us denote by $\delta $ the \emph{discrete boolean metric}, $\delta
(x,y)=0\Leftrightarrow x=y,$ and define the function

\begin{equation*}
\begin{array}{lrll}
F_{f}: & \llbracket1;\mathsf{N}\rrbracket\times \mathds{B}^{\mathsf{N}} & 
\longrightarrow & \mathds{B}^{\mathsf{N}} \\ 
& (k,E) & \longmapsto & \left( E_{j}.\delta (k,j)+f(E)_{k}.\overline{\delta
(k,j)}\right) _{j\in \llbracket1;\mathsf{N}\rrbracket},%
\end{array}%
\end{equation*}%
where + and . are boolean operations.

\bigskip

Consider the phase space%
\begin{equation*}
\mathcal{X} = \llbracket 1 ; \mathsf{N} \rrbracket^\mathds{N} \times
\mathds{B}^\mathsf{N},
\end{equation*}

\noindent and the map 
\begin{equation}
G_f\left(S,E\right) = \left(\sigma(S), F_f(i(S),E)\right)   \label{Gf}
\end{equation}

\noindent Then one can remark that the chaotic iterations defined in (\ref%
{chaotic iterations}) can be described by the following iterations 
\begin{equation*}
\left\{ 
\begin{array}{l}
X[0]\in \mathcal{X} \\ 
X[k+1]=G_{f}(X[k]).%
\end{array}%
\right.
\end{equation*}%
\bigskip

The following result can be easily proven, by comparing $\mathcal{S}$ and $\mathds{R}$ that,

\begin{theorem}
The phase space $\mathcal{X}$ has the cardinality of the continuum.
\end{theorem}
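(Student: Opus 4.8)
The plan is to decompose $\mathcal{X} = \mathcal{S} \times \mathds{B}^{\mathsf{N}}$, where $\mathcal{S} = \llbracket 1;\mathsf{N}\rrbracket^{\mathds{N}}$ is the set of strategies and $\mathds{B}^{\mathsf{N}}$ is a finite set of exactly $2^{\mathsf{N}}$ points. Since adjoining a finite Cartesian factor does not change an infinite cardinality, it suffices to prove that $\mathcal{S}$ alone has the cardinality of the continuum; I would then conclude $|\mathcal{X}| = 2^{\mathsf{N}}\cdot|\mathcal{S}| = |\mathcal{S}| = \mathfrak{c}$.

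To establish $|\mathcal{S}| = \mathfrak{c}$ (assuming $\mathsf{N}\geqslant 2$, which is needed for the statement to hold at all), I would invoke the Cantor--Schr\"oder--Bernstein theorem rather than try to exhibit an explicit bijection with $\mathds{R}$. For the lower bound I construct an injection $\{0,1\}^{\mathds{N}}\hookrightarrow\mathcal{S}$ by sending $0\mapsto 1$ and $1\mapsto 2$ coordinatewise; this is legitimate because $\{1,2\}\subseteq\llbracket 1;\mathsf{N}\rrbracket$, and it gives $\mathfrak{c} = 2^{\aleph_0}\leqslant|\mathcal{S}|$. For the upper bound I encode each strategy $S=(S[n])_{n\in\mathds{N}}$ by writing every term $S[n]\in\{1,\dots,\mathsf{N}\}$ as a fixed-length block of $\lceil\log_2\mathsf{N}\rceil$ bits and concatenating the blocks; fixed block length makes this decodable, hence injective, yielding an injection $\mathcal{S}\hookrightarrow\{0,1\}^{\mathds{N}}$ and so $|\mathcal{S}|\leqslant 2^{\aleph_0} = \mathfrak{c}$. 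The two bounds together force $|\mathcal{S}| = \mathfrak{c}$.

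An equivalent route, closer to the authors' hint of ``comparing $\mathcal{S}$ and $\mathds{R}$'', is to map a strategy to the real number of $[0,1]$ whose base-$\mathsf{N}$ expansion it spells out. The only delicate point here---and the reason I prefer the two one-sided injections above---is the non-uniqueness of positional expansions (numbers such as $0.a\overline{(\mathsf{N}-1)}$ and $0.(a{+}1)\overline{0}$ coincide), so the naive map is neither quite injective nor quite surjective onto its range. Cantor--Schr\"oder--Bernstein sidesteps this entirely, since it requires only an injection in each direction and never an exact bijection. I expect the bookkeeping about ambiguous representations to be the sole genuine obstacle; everything else is routine cardinal arithmetic.
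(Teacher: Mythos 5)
Your proof is correct, but it is worth noting that the paper does not actually contain a proof of this theorem: it only asserts that the result ``can be easily proven, by comparing $\mathcal{S}$ and $\mathds{R}$,'' i.e.\ it gestures at exactly the base-$\mathsf{N}$ expansion map that you describe and then reject in your last paragraph. Your route is genuinely different in execution: instead of a direct bijection between $\mathcal{S}$ and (a subset of) $\mathds{R}$, you use Cantor--Schr\"oder--Bernstein with two one-sided injections, $\{0,1\}^{\mathds{N}}\hookrightarrow\mathcal{S}$ by relabeling symbols and $\mathcal{S}\hookrightarrow\{0,1\}^{\mathds{N}}$ by fixed-length block encoding, which cleanly avoids the ambiguity of positional expansions (the $0.a\overline{(\mathsf{N}-1)}=0.(a{+}1)\overline{0}$ problem) that makes the naive comparison with $\mathds{R}$ fail to be a bijection. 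What your approach buys is rigor at essentially no cost: both injections are trivially well defined, and the reduction $|\mathcal{X}|=2^{\mathsf{N}}\cdot|\mathcal{S}|=|\mathcal{S}|$ is routine cardinal arithmetic. You also caught a point the paper gets wrong: the claim following the theorem that ``this result is independent on the number of cells'' is false for $\mathsf{N}=1$, where $\mathcal{S}$ is a single constant sequence and $\mathcal{X}$ has exactly two elements; your standing hypothesis $\mathsf{N}\geqslant 2$ is genuinely necessary, and flagging it is a small but real improvement over the paper's statement.
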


Note that this result is independent on the number of cells.

\subsubsection{A new distance}

We define a new distance between two points $(S,E),(\check{S},\check{E})\in 
\mathcal{X}$ by%
\begin{equation*}
d((S,E);(\check{S},\check{E}))=d_{e}(E,\check{E})+d_{s}(S,\check{S}),
\end{equation*}

\noindent where

\begin{equation*}
\left\{ 
\begin{array}{lll}
\displaystyle{d_{e}(E,\check{E})} & = & \displaystyle{\sum_{k=1}^{\mathsf{N}%
}\delta (E_{k},\check{E}_{k})}, \\ 
\displaystyle{d_{s}(S,\check{S})} & = & \displaystyle{\dfrac{9}{\mathsf{N}}%
\sum_{k=1}^{\infty }\dfrac{|S[k]-\check{S}[k]|}{10^{k}}}.%
\end{array}%
\right. 
\end{equation*}

It should be noticed that if the floor function $\lfloor d(X,Y)\rfloor =n$,
then the strategies $X$ and $Y$ differs in $n$ cells and that $d(X,Y) -
\lfloor d(X,Y) \rfloor $ gives a measure on how the strategies $S$ and $%
\text{\v{S}}$ diverge. More precisely,

\begin{itemize}
\item This floating part is less than $10^{-k}$ if and only if the first $k$
terms of the two strategies are equal.

\item If the $k-$th digit is nonzero, then the $k-$th terms of the two
strategies are different.
\end{itemize}

\subsection{Continuity of the iteration function}

To prove that chaotic iterations are an example of topological chaos in the
sense of Devaney ~\cite{Dev89}, $G_{f}$ should be continuous on the metric
space $(\mathcal{X},d)$.

\begin{theorem}
$G_f$ is a continuous function.
\end{theorem}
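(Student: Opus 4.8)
The plan is to verify continuity directly from the $\varepsilon$--$\eta$ definition, exploiting the fact that the metric splits as $d = d_e + d_s$ and that its first summand $d_e$ is integer-valued. I would fix a point $X = (S,E)$ and $\varepsilon > 0$, and look for $\eta > 0$ so that $d(X,\check{X}) < \eta$ implies $d(G_f(X), G_f(\check{X})) < \varepsilon$, where $\check{X} = (\check{S},\check{E})$. Writing out the image distance,
\[
d(G_f(X), G_f(\check{X})) = d_e\big(F_f(i(S),E), F_f(i(\check{S}),\check{E})\big) + d_s\big(\sigma(S), \sigma(\check{S})\big),
\]
I would control the two summands separately: the first will be forced to vanish exactly, and the second will be made smaller than $\varepsilon$.

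For the discrete summand, the key remark is that $F_f(i(S),E)$ and $F_f(i(\check{S}),\check{E})$ lie in the finite set $\mathds{B}^{\mathsf{N}}$, so their $d_e$-distance is a nonnegative integer, and it is $<1$ precisely when the two boolean vectors coincide. Hence it suffices to guarantee that $F_f$ receives identical arguments, i.e. that $E = \check{E}$ and $i(S) = i(\check{S})$. The first equality holds as soon as $\eta \leqslant 1$, since then $d_e(E,\check{E}) \leqslant d(X,\check{X}) < 1$ forces $d_e(E,\check{E}) = 0$. The second, $S[0] = \check{S}[0]$, would be secured by taking $\eta$ small enough that the leading terms of the two strategies must agree, using the digit characterisation recorded just after the definition of $d$ (a nonzero $k$-th digit of the floating part witnesses a disagreement in the $k$-th strategy terms).

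For the strategy summand I would use that $\sigma$ is Lipschitz with constant at most $10$: a direct reindexing gives $d_s(\sigma(S),\sigma(\check{S})) = 10\, d_s(S,\check{S}) - \tfrac{9}{\mathsf{N}}\,|S[1]-\check{S}[1]| \leqslant 10\, d_s(S,\check{S})$, so requiring $d_s(S,\check{S}) < \varepsilon/10$ makes this summand $< \varepsilon$. Collecting the constraints, I would set $\eta = \min\{1,\ \varepsilon/10,\ \eta_0\}$, where $\eta_0$ is the threshold from the previous paragraph forcing $i(S)=i(\check{S})$; then $d(X,\check{X})<\eta$ yields $E=\check{E}$ and $i(S)=i(\check{S})$, hence a vanishing first summand, together with a second summand below $\varepsilon$, completing the argument. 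The only delicate point is the bookkeeping linking the floating part of $d$ to the agreement of the initial strategy terms — in particular checking that the strategy term $S[0]$ read off by the initial map $i$ is genuinely controlled by $d_s$ — whereas the finite, boolean nature of $\mathds{B}^{\mathsf{N}}$ makes the state component essentially automatic.
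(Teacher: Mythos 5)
Your proof is correct, and it differs from the paper's in two substantive ways. The paper argues via \emph{sequential} continuity: it takes $(S[n],E[n])\rightarrow (S,E)$, uses the integer-valuedness of $d_e$ to get $E[n]=E$ for large $n$, uses $d_s(S[n],S)<10^{-1}$ to force agreement of the leading strategy terms (hence equality of the image states), and then handles the strategy component of the image by digit matching: if the first $k+2$ terms of $S[n]$ and $S$ agree, the first $k+1$ terms of the shifted strategies agree, so that component is below $10^{-(k+1)}\leqslant \varepsilon$; this requires a case split on $\varepsilon \geqslant 1$ versus $\varepsilon <1$. You instead give a direct $\varepsilon$--$\eta$ argument and replace the digit matching by the exact identity $d_s(\sigma(S),\sigma(\check{S}))=10\,d_s(S,\check{S})-\tfrac{9}{\mathsf{N}}|S[1]-\check{S}[1]|$, i.e.\ the Lipschitz bound $d_s(\sigma(S),\sigma(\check{S}))\leqslant 10\,d_s(S,\check{S})$, which I have verified is correct under the paper's metric. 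This buys you two things: the case analysis disappears, and since your modulus $\eta=\min\{1,\varepsilon/10,\eta_0\}$ is independent of the base point, you actually establish \emph{uniform} continuity of $G_f$, which is stronger than what the paper proves. The one caveat is the point you yourself flag: the paper writes $d_s$ as a sum over $k\geqslant 1$ while the initial map $i$ reads $S[0]$, so for your threshold $\eta_0$ to exist the metric must be read as weighting every strategy term, including the one consumed by $i$ (otherwise $d$ would not even separate points); with that reading, $\eta_0=\tfrac{9}{10\mathsf{N}}$ works, and the paper's own proof makes exactly the same implicit assumption when it deduces $S[n][0]=S[0]$ from $d_s(S[n],S)<10^{-1}$. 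So your argument is sound, and on this shared delicate point it is no less rigorous than the original.
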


\begin{proof}
We use the sequential continuity.

Let $(S[n],E[n])_{n\in \mathds{N}}$ be a sequence of the phase space $%
\mathcal{X}$, which converges to $(S,E)$. We will prove that $\left(
G_{f}(S[n],E[n])\right) _{n\in \mathds{N}}$ converges to $\left(
G_{f}(S,E)\right) $. Let us recall that for all $n$, $S[n]$ is a strategy,
thus, we consider a sequence of strategy (\emph{i.e.} a sequence of
sequences).\newline
As 
\begin{equation*}
d((S[n],E[n]);(S,E))
\end{equation*}%
converges to 0, each distance $d_{e}(E[n],E)$ and $d_{s}(S[n],S)$ converges
to 0. But $d_{e}(E[n],E)$ is an integer, so $\exists n_{0}\in \mathds{N},$ $%
d_{e}(E[n],E)=0$ for any $n\geqslant n_{0}$.\newline
In other words, there exists threshold $n_{0}\in \mathds{N}$ after which no
cell will change its state: 
\begin{equation*}
\exists n_{0}\in \mathds{N},n\geqslant n_{0}\Longrightarrow E[n]=E.
\end{equation*}%
In addition, $d_{s}(S[n],S)\longrightarrow 0,$ so $\exists n_{1}\in %
\mathds{N},d_{s}(S[n],S)<10^{-1}$ for all indices greater than or equal to $%
n_{1}$. This means that for $n\geqslant n_{1}$, all the $S[n]$ have the same
first term, which is $S[0]$:%
\begin{equation*}
\forall n\geqslant n_{1},S[n][0]=S[0].
\end{equation*}%
Thus, after the $max(n_{0},n_{1})-$th term, states of $E[n]$ and $E$ are the
same, and strategies $S[n]$ and $S$ start with the same first term.\newline
Consequently, states of $G_{f}(S[n],E[n])$ and $G_{f}(S,E)$ are equal, then
distance $d$ between this two points is strictly less than 1 (after the rank 
$max(n_{0},n_{1})$).\bigskip \newline
\noindent We now prove that the distance between $\left(
G_{f}(S[n],E[n])\right) $ and $\left( G_{f}(S,E)\right) $ is convergent to
0. Let $\varepsilon >0$. \medskip

\begin{itemize}
\item If $\varepsilon \geqslant 1$, then we have seen that the distance
between $\left( G_{f}(S[n],E[n])\right) $ and $\left( G_{f}(S,E)\right) $ is
strictly less than 1 after the $max(n_{0},n_{1})$-th term (same state).
\medskip

\item If $\varepsilon <1$, then $\exists k\in \mathds{N},10^{-k}\geqslant
\varepsilon \geqslant 10^{-(k+1)}$. But $d_{s}(S[n],S)$ converges to 0, so 
\begin{equation*}
\exists n_{2}\in \mathds{N},\forall n\geqslant
n_{2},d_{s}(S[n],S)<10^{-(k+2)},
\end{equation*}%
after $n_{2}$, the $k+2$-th first terms of $S[n]$ and $S$ are equal.
\end{itemize}

\noindent As a consequence, the $k+1$ first entries of the strategies of $%
G_{f}(S[n],E[n])$ and $G_{f}(S,E)$ are the same (because $G_{f}$ is a shift
of strategies), and due to the definition of $d_{s}$, the floating part of
the distance between $(S[n],E[n])$ and $(S,E)$ is strictly less than $%
10^{-(k+1)}\leqslant \varepsilon $.\bigskip \newline
In conclusion, $G_{f}$ is continuous,%
\begin{equation*}
\forall \varepsilon >0,\exists N_{0}=max(n_{0},n_{1},n_{2})\in \mathds{N}%
,\forall n\geqslant N_{0},d\left( G_{f}(S[n],E[n]);G_{f}(S,E)\right)
\leqslant \varepsilon .
\end{equation*}
\end{proof}

In this section, we proved that chaotic iterations can be modelized as a
dynamical system in a topological space. In the next section, we show that
chaotic iterations are a case of topological chaos, in the sense of Devaney.

\section{Discrete chaotic iterations are topological chaos}

To prove that we are in the framework of Devaney's topological chaos, we
will check the regularity and transitivity conditions.

\subsection{Regularity}

\begin{theorem}
Periodic points of $G_{f}$ are dense in $\mathcal{X}$.
\end{theorem}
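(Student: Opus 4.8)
The plan is to establish density directly, by producing, for an arbitrary target point $(\check{S},\check{E})\in\mathcal{X}$ and an arbitrary $\varepsilon>0$, a periodic point of $G_{f}$ within distance $\varepsilon$. First I would exploit the structure of the distance $d$. Since $d_{e}$ takes only integer values while $d_{s}$ is bounded above by $\frac{\mathsf{N}-1}{\mathsf{N}}<1$, any point at distance strictly less than $1$ from $(\check{S},\check{E})$ must share its boolean component; hence I am forced to take $E=\check{E}$ exactly. For the strategy part, the remarks following the definition of $d$ show that if a strategy $S$ agrees with $\check{S}$ on its first $k$ terms, then the floating part of $d$ is at most $10^{-k}$. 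So, fixing $k$ with $10^{-k}\leqslant\varepsilon$, the whole problem reduces to the following combinatorial task: construct a \emph{periodic} strategy $S$ that coincides with $\check{S}$ on its first $k$ terms and for which $(S,\check{E})$ is genuinely a periodic point of $G_{f}$.

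Next I would make precise what it means for $(S,E)$ to be periodic. Because $G_{f}(S,E)=(\sigma(S),F_{f}(i(S),E))$ and $F_{f}$ alters only the single cell $i(S)=S[0]$, iterating $G_{f}$ merely shifts $S$ while performing the associated one-cell updates on $E$. Consequently $(S,E)$ can have period $p$ only if $\sigma^{p}(S)=S$, i.e. $S$ is a purely periodic sequence with period block $\beta=(S[0],\dots,S[p-1])$. Writing $\Phi_{\beta}:\mathds{B}^{\mathsf{N}}\to\mathds{B}^{\mathsf{N}}$ for the composite of the $p$ one-cell updates dictated by $\beta$, one has $G_{f}^{\,p}(S,E)=(S,\Phi_{\beta}(E))$, so $(S,E)$ is a periodic point of $G_{f}$ exactly when $S$ is periodic and $E$ is a periodic point of $\Phi_{\beta}$ (a fixed point of $\Phi_{\beta}$ being sufficient). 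This reformulation is the conceptual core: density of the periodic points of $G_{f}$ amounts to realizing the boolean state $\check{E}$ together with the prefix $\check{S}[0],\dots,\check{S}[k-1]$ inside such a returning period.

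Finally I would carry out the construction. Let $\alpha=(\check{S}[0],\dots,\check{S}[k-1])$, and let $E_{k}$ be the state obtained by running the chaotic iterations from $\check{E}$ along $\alpha$. I would then append a finite word $w$ chosen so that applying the updates of $w$ to $E_{k}$ brings the state back to $\check{E}$, and take $S$ to be the infinite repetition of the block $\beta=\alpha w$. By construction $S$ is periodic, matches $\check{S}$ on its first $k$ terms, and satisfies $\Phi_{\beta}(\check{E})=\check{E}$, so $(S,\check{E})$ is a periodic point of $G_{f}$ at distance at most $\varepsilon$ from $(\check{S},\check{E})$, which yields density. The main obstacle — and the step I would spend the most care on — is the existence of the return word $w$, i.e. showing that $\check{E}$ is reachable from $E_{k}$ under the one-cell update maps $x\mapsto F_{f}(i,x)$. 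The natural candidate is to treat in turn each cell on which $E_{k}$ and $\check{E}$ disagree, but whether a single update of such a cell can be forced to set it to the desired value depends on $f$; guaranteeing this reachability for \emph{every} $f$ is exactly where the argument is delicate, and I would isolate it as the key lemma (possibly under a mild hypothesis on $f$, since for a degenerate $f$ — for instance a constant map — certain boolean states cannot lie on any cycle and admit no nearby periodic point).
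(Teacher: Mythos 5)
Your proposal is correct for the case that actually matters (the vectorial negation $f_{0}$), and it takes a genuinely different --- and sounder --- route than the paper's own proof. Both arguments begin identically: for $\varepsilon<1$ the integer-valued $d_{e}$ and the bound $d_{s}<1$ force the candidate point to have state exactly $\check{E}$, and one copies the first $k\approx-\log_{10}\varepsilon$ terms of $\check{S}$. But the paper then completes the strategy with an infinite tail of $1$'s and argues, explicitly ``due to the negation function'', that after one or two further iterations the state stops changing. That construction produces only an \emph{eventually} periodic point: a copied prefix followed by $1,1,1,\dots$ is not shift-periodic (unless the prefix is itself all $1$'s), so $G_{f}^{p}(S',E')=(S',E')$ cannot hold for any $p$, and by the paper's own definition of a periodic point ($f^{n}(x)=x$) the constructed point is not periodic --- the paper concedes as much when it writes ``periodic (at least for sufficiently large indices)''. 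Your construction $S=(\alpha w)(\alpha w)(\alpha w)\cdots$, where $w$ is a return word restoring the state to $\check{E}$, repairs exactly this defect: it makes $S$ genuinely shift-periodic and $(S,\check{E})$ a true fixed point of $G_{f}^{\,|\alpha w|}$, which is what density of periodic points requires. Moreover, the obstacle you isolate --- the existence of $w$ --- is precisely what the paper glosses over: for $f_{0}$ the return word is immediate (list the cells where the evolved state and $\check{E}$ disagree; each negation update flips one of them back), which closes your argument in the setting the paper actually uses, while your constant-map observation is a genuine counterexample to the theorem as stated for arbitrary $f$: if $f\equiv c$, every update writes the corresponding bit of $c$ and can never be undone, so no point whose state is the complement $\overline{c}$ is periodic, yet every point at distance less than $1$ from such a point shares that state. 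In short, your route buys a correct proof for $f_{0}$ together with the correct diagnosis that the general statement needs a hypothesis on $f$, whereas the paper's shortcut buys brevity at the cost of proving only eventual periodicity under an unstated assumption on $f$.
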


\begin{proof}
Let $(S,E)\in \mathcal{X}$, and $\varepsilon >0$. We are looking for a
periodic point $(S^{\prime },E^{\prime })$ satisfying
\begin{equation*}
d((S,E);(S^{\prime },E^{\prime }))<\varepsilon .
\end{equation*}%
We choose $E^{\prime }=E$, and we \textquotedblleft copy\textquotedblright\
enough entries from $S$ to $S^{\prime }$ so that the distance between $%
(S^{\prime },E)$ and $(S,E)$ is strictly less than $\varepsilon $: a number $%
k=\lfloor log_{10}(\varepsilon )\rfloor +1$ of terms is sufficient.\newline
After this $k$-th iterations, the new common state is $\mathcal{E}$, and
strategy $S$ is shifted of $k$ positions: $\sigma ^{k}(S)$.\newline
Then we have to complete strategy $S^{\prime }$ in order to make $(E^{\prime
},S^{\prime })$ periodic (at least for sufficiently large indices). To do
so, we put an infinite number of 1 to the strategy $S^{\prime }$. Then,
either:

\begin{enumerate}
\item The first state is conserved after one iteration, so $\mathcal{E}$ is
unchanged and we obtain a fixed point. Or

\item The first state is not conserved, then:

\begin{itemize}
\item If the first state is not conserved after a second iteration, then we
will be again in the first case above (due to the negation function).

\item Otherwise the first state is conserved, and we have indeed a fixed
(periodic) point.
\end{itemize}
\end{enumerate}

Thus, there exists a periodic point into every neighborhood of any point, so 
$(\mathcal{X},G)$ is regular.
\end{proof}

\subsection{Transitivity}

Contrary to the regularity, the topological transitivity condition is not
automatically satisfied by any function ($f=Identity$ is not topologically
transitive).

Let us denote by $\mathcal{T}$ the set of maps $f$ such that $(\mathcal{X}%
,G_{f})$ is topologically transitive. Then.

\begin{theorem}
$\mathcal{T}$ is a nonempty set.
\end{theorem}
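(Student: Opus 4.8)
The plan is to prove nonemptiness constructively by exhibiting one explicit map lying in $\mathcal{T}$. The natural candidate --- already hinted at in the regularity proof through its reference to ``the negation function'' --- is the vectorial logical negation $f_0:\mathds{B}^{\mathsf{N}}\longrightarrow \mathds{B}^{\mathsf{N}}$, $f_0(E)=(\overline{E_1},\dots,\overline{E_{\mathsf{N}}})$. I would show that $(\mathcal{X},G_{f_0})$ is topologically transitive, which immediately gives $f_0\in\mathcal{T}$ and hence $\mathcal{T}\neq\varnothing$. Since every nonempty open set contains an open ball, transitivity reduces to the following pointwise statement: for every pair of points $(S,E),(\check{S},\check{E})\in\mathcal{X}$ and every $\varepsilon>0$, there exist a point $(\tilde{S},\tilde{E})$ with $d((S,E);(\tilde{S},\tilde{E}))<\varepsilon$ and an integer $k>0$ such that $d(G_{f_0}^{k}(\tilde{S},\tilde{E});(\check{S},\check{E}))<\varepsilon$. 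Given open $U,V$, one picks $(S,E)\in U$, $(\check{S},\check{E})\in V$ and takes $\varepsilon$ smaller than the radii of balls contained in $U$ and $V$.

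The key observation that makes $f_0$ work is a reachability property of the iteration map. By the very definition of $F_f$, iterating the cell $S[n]$ replaces its state by $f(E)_{S[n]}$ and leaves all other cells untouched; for $f=f_0$ this simply \emph{flips} the designated cell. Consequently, from any state $\mathcal{E}\in\mathds{B}^{\mathsf{N}}$ and any target $\check{E}\in\mathds{B}^{\mathsf{N}}$ one can reach $\check{E}$ in at most $\mathsf{N}$ iterations: it suffices to let the strategy enumerate, one after another, exactly those indices $i$ for which $\mathcal{E}_i\neq\check{E}_i$. This is the engine of the whole argument.

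With this in hand I would build $\tilde{S}$ in three successive position-blocks, while setting $\tilde{E}=E$. First, on the leading positions let $\tilde{S}$ coincide with $S$ for enough terms that $d_s(\tilde{S},S)<\varepsilon$; by the prefix interpretation of $d_s$ recorded after the definition of the distance, finitely many (say $k_1$) terms suffice, and since that bound only concerns the first $k_1$ terms, whatever is placed afterwards cannot spoil closeness to $(S,E)$. Let $\mathcal{E}$ be the common state reached after these $k_1$ iterations. Second, append the steering block of length $m\leqslant\mathsf{N}$ supplied by the reachability property, so that after $k=k_1+m$ iterations the state equals exactly $\check{E}$, making the $d_e$-part of $d(G_{f_0}^{k}(\tilde{S},E);(\check{S},\check{E}))$ vanish. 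Third, on the positions of index $\geqslant k$ define $\tilde{S}$ to agree with $\check{S}$ (i.e. $\tilde{S}[k+j]=\check{S}[j]$) for enough terms, so that the shifted strategy $\sigma^{k}(\tilde{S})$ --- which is the strategy component of $G_{f_0}^{k}(\tilde{S},E)$ --- is within $\varepsilon$ of $\check{S}$ in $d_s$. Assembling the three disjoint blocks yields both required inequalities.

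The main obstacle I anticipate is purely the bookkeeping of the distance estimates across the shift, rather than any conceptual difficulty: one must track how prefix-matching propagates through the $\tfrac{9}{\mathsf{N}}$ normalization and through the $k$-fold application of $\sigma$, and check that the index $k=k_1+m$ at which the copy of $\check{S}$ begins is chosen consistently, the steering block of variable length $m\leqslant\mathsf{N}$ sitting in between. None of this threatens the construction, since $f_0$ is fixed and $G_{f_0}$ is already known to be continuous by the previous theorem; it only requires care in deciding how many terms of $S$ and of $\check{S}$ to copy. It is worth remarking that the scheme in fact produces a dense set of such orbits for every value of $\mathsf{N}$, so one could strengthen the statement to note that $\mathcal{T}$ contains a whole family of maps.
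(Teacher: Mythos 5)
Your proposal is correct and takes essentially the same route as the paper: the paper also proves nonemptiness by exhibiting the vectorial negation $f_0$ and, given two open balls, constructs a strategy by concatenating a prefix of $S_A$ (to stay close to $X_A$), a finite block that flips exactly the cells where the intermediate state differs from $E_B$, and then a prefix of $S_B$ (to land close to $X_B$). The only cosmetic differences are that you state transitivity pointwise with an $\varepsilon$ instead of with two balls, and that you make explicit the bound $m\leqslant \mathsf{N}$ on the length of the steering block, which the paper leaves as ``a finite number of states.''
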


\begin{proof}
We will prove that the vectorial logical negation function $f_{0}$

\begin{equation}
\begin{array}{rccc}
f_{0}: & \mathds{B}^{\mathsf{N}} & \longrightarrow & \mathds{B}^{\mathsf{N}}
\\ 
& (x_{1},\hdots,x_{\mathsf{N}}) & \longmapsto & (\overline{x_{1}},\hdots,%
\overline{x_{\mathsf{N}}}) \\ 
&  &  & 
\end{array}
\label{f0}
\end{equation}%
\noindent is topologically transitive.\newline
Let $A=\mathcal{B}(X_{A},r_{A})$ and $B=\mathcal{B}(X_{B},r_{B})$ be two
open balls of $\mathcal{X}$. Our goal is to start from a point of $A$ (\emph{%
i.e.} a point close to $X_{A}$) and to arrive in $B$ (a point close to $%
X_{B} $).\newline
We have to be close to $X_{A}$, then the starting state is $E_{A}$, it
remains to determine the strategy $S$. We start by filling $S$ with the $%
n_{0}$ first terms of strategy $S_{A}$ of $X_{A}$, so that $(S,E_{A})\in
B_{A}$.\newline
Let $E$ be the image of the state $E_{A}$ by mapping the $n_{0}$-th first
terms of the strategy $S$. This new state $E$ differs from $E_{B}$ by a
finite number of states, we put these cells to our strategy $S$ (this adds $%
n_{1}$ integers to $S$).\newline
In short, starting from $(S,E_{A})$, we are in $X_{B}$ after $n_{0}+n_{1}$
iterations, and the strategy $S$ was shifted of $n_{0}+n_{1}$ terms (there
is no more term in $S$).\newline
In order to be sufficiently close to $(S_{B},E_{B})$ (at a distance less
than $\varepsilon $ from $(S_{B},E_{B})$), we add as much as necessarily
terms of $S_{B}$ to $S$ and we complete $S$ with an infinity of terms equal
to $1.$\newline
\end{proof}

\begin{remark}
In fact, we can prove that $(\mathcal{X},G_{f_{0}})$ is \emph{highly
topologically transitive} in the following sense: for every $(S_{A},E_{A})$
and $(S_{B},E_{B})$ of $\mathcal{X}$, there exists a point sufficiently
close to $(S_{A},E_{A})$ and $n_{0}\in \mathds{N}$ such that $%
G_{f_{0}}^{n_{0}}(S_{A},E_{A})=(S_{B},E_{B})$.
\end{remark}

In conclusion, if $f\in \mathcal{T}\neq \varnothing $, then $(\mathcal{X}%
,G_{f})$ is topologically transitive and regular, and then we have the
result.

\begin{theorem}
$\forall f \in \mathcal{T}, (\mathcal{X},G_f)$ is chaotic, in the sense of
Devaney.
\end{theorem}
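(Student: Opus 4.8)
The plan is to verify the three defining clauses of Devaney chaos directly, leaning entirely on results already established for $G_f$ in the preceding sections, so that the theorem becomes a short corollary rather than a fresh argument. Two of the three ingredients are essentially in hand. Topological transitivity holds by the very definition of $\mathcal{T}$: saying $f \in \mathcal{T}$ means precisely that $(\mathcal{X},G_f)$ is topologically transitive. Regularity holds because the preceding theorem asserts that the periodic points of $G_f$ are dense in $\mathcal{X}$. Hence the only condition I still have to produce is sensitive dependence on initial conditions.

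For the sensitivity clause I would not argue from scratch, but instead invoke the theorem of Banks \emph{et al.}~\cite{Banks92} recalled in Section~II: on a metric space, a map that is simultaneously topologically transitive and regular automatically has sensitive dependence on initial conditions. Applied to $(\mathcal{X},G_f)$ with $f \in \mathcal{T}$, this immediately delivers the third ingredient, and the three conditions together give exactly Devaney's definition, so $(\mathcal{X},G_f)$ is chaotic.

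Before appealing to Banks \emph{et al.}, I would confirm that its hypotheses genuinely hold in our setting. The result needs $(\mathcal{X},d)$ to be a metric space, which is true by the construction of $d$ in Section~III; it needs $\mathcal{X}$ to be infinite, since otherwise transitivity together with dense periodicity would collapse onto a single periodic orbit and sensitivity could fail, and this is guaranteed by the theorem stating that $\mathcal{X}$ has the cardinality of the continuum. The continuity of $G_f$, proved earlier, also ensures we are dealing with a bona fide continuous dynamical system as all these notions require.

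The hard part, such as it is, does not lie in this final assembly but has already been discharged in the two preceding subsections: the genuinely nontrivial work was establishing density of the periodic points and exhibiting an explicit $f$, the vectorial negation $f_0$, for which transitivity holds, so that $\mathcal{T}\neq\varnothing$. The one point I would scrutinize most carefully is whether the regularity argument really covers every $f\in\mathcal{T}$ and not merely $f_0$, since the density proof completes strategies with an infinity of $1$'s and invokes the behavior of the negation function; if the density theorem is only secured for $f_0$, the present statement is safe at least for that map, and I would want to either state the regularity lemma in the generality actually used or restrict attention to the class of $f$ for which density has been verified. Granting regularity for the relevant $f$, the conclusion follows: transitivity by hypothesis, regularity by the density theorem, and sensitivity by Banks \emph{et al.}, whence $(\mathcal{X},G_f)$ is chaotic in the sense of Devaney~\cite{Dev89}.
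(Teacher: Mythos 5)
Your proposal is correct and takes essentially the same route as the paper: transitivity holds by definition of $\mathcal{T}$, regularity by the density theorem, and sensitivity is obtained for free from the Banks \emph{et al.} result recalled in Section II, exactly as the paper's one-line conclusion does. Your closing caveat---that the regularity proof leans on the negation function and so may only be secured for $f_0$ rather than all $f \in \mathcal{T}$---is a legitimate observation, but it is a gap in the paper's own regularity argument rather than in your assembly of the final theorem.
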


\section{Hash functions based on topological chaos}

\subsection{Objective}

As an application of the previous theory, we define in this section a new
way to generate hash functions based on topological chaos. Our approach
guarantees to obtain various desired properties in the domain of encryption.
For example, the avalanche criterion is closely linked to the expansivity
property (see the next section below).

The following hash function is based on the vectorial boolean negation $f_{0}
$ defined in (\ref{f0}). Nevertheless, our procedure remains general, and
can be applied with any transitive function $f$.

\subsection{Application of the new hash function}

Our initial condition $X0=\left( S,E\right) $ is composed by:

\begin{itemize}
\item A 256 bits sequence that we call $E$, obtained from the original text.

\item A chaotic strategy $S$.
\end{itemize}

In the sequel, we describe how to obtain this initial condition $(S,E)$.

\subsubsection{How to obtain $E$}

The first step of our algorithm is to transform the message in a normalized
256 bits sequence $E$. To illustrate this step, we take an example, our
original text is: \emph{The original text}

Each character of this string is replaced by its ASCII code (on 7 bits).
Then, we add a 1 to this string.

\bigskip

\begin{center}
\begin{alltt}
\noindent 10101001 10100011 00101010 00001101 11111100 10110100
\noindent 11100111 11010011 10111011 00001110 11000100 00011101
\noindent 00110010 11111000 11101001 
\end{alltt}
\end{center}

\bigskip

Then, we add the binary value of the length of this string, and we add 1 one
more time:

\bigskip

\begin{center}
\begin{alltt}
\noindent 10101001 10100011 00101010 00001101 11111100 10110100
\noindent 11100111 11010011 10111011 00001110 11000100 00011101
\noindent 00110010 11111000 11101001 11110001
\end{alltt}
\end{center}

\bigskip

Then, the whole string is copied, but in the opposite direction, this gives:

\bigskip

\begin{center}
\begin{alltt}
\noindent 10101001 10100011 00101010 00001101 11111100 10110100
\noindent 11100111 11010011 10111011 00001110 11000100 00011101
\noindent 00110010 11111000 11101001 11110001 00011111 00101110
\noindent 00111110 10011001 01110000 01000110 11100001 10111011
\noindent 10010111 11001110 01011010 01111111 01100000 10101001
\noindent 10001011 0010101
\end{alltt}
\end{center}

\medskip So, we obtain a multiple of 512, by duplicating enough this 
string and truncating at a multiple of 512.
This string, in which contains the whole original text is denoted by $D$.

\bigskip

Finally, we split our obtained string into blocks of 256 bits, and apply to
them the exclusive-or function, obtaining a 256 bits sequence.

\bigskip 
\begin{alltt}
\noindent 11111010 11100101 01111110 00010110 00000101 11011101
\noindent 00101000 01110100 11001101 00010011 01001100 00100111
\noindent 01010111 00001001 00111010 00010011 00100001 01110010
\noindent 01000011 10101011 10010000 11001011 00100010 11001100
\noindent 10111000 01010010 11101110 10000001 10100001 11111010
\noindent 10011101 01111101 
\end{alltt}

So, in the context of subsection (1)
, $\mathsf{N}=256$, and $E$ is the above obtained sequence of 256 bits. Let
us now build the strategy $S$.

We now have the definitive length of our digest. Note that a lot of texts
have the same string. This is not a problem because the strategy we will
build will depends on the whole text.

\subsubsection{How to choose $S$}

In order to forge our strategy, \emph{i.e.} the sequence $S$ of $X[0]=(S,E)$%
, we use the previously obtained string $D$, and then we start by
constructing an intermediate sequence as follows:

\begin{enumerate}
\item We split this string into blocks of 8 bits, and we add to our sequence
the corresponding decimal value of each octet.

\item We take then the first bit of this string, and put it on the end. Then
we split the new string into blocks of 8 bits, and we add in the sequence
decimal value associated.

\item We repeat this operation 6 times.
\end{enumerate}

The general term of this sequence will be denoted by $(u[n])_{n}$.

\bigskip

Now, we are able to build our strategy $S$. The first term of $S$ is the
initial term of the preceding sequence. The $n-$th term is the sum (modulo
256) of the three following terms:

\begin{itemize}
\item the $n-$th term of the intermediate sequence (the strategy depends on
the original text),

\item the double of the $n-1$-th term of the strategy (introduction of
sensitivity, with the analogy with the well known chaotic map $\theta
\longmapsto 2\theta ~(mod ~1)$),

\item $n$ (to prevent periodic behaviour).
\end{itemize}

So, the general term $S[n]$ of $S$ is defined by%
\begin{equation*}
S[n]=(u[n]+2\times S[n-1]+n) ~(mod ~256).
\end{equation*}%
Strategy $S$ is strong sensitive to the modification of the original
text, because the map $\theta \longmapsto 2\theta ~(mod ~1)$ is known to be
chaotic in the sense of Devaney.

\subsubsection{How to construct the digest}

We apply the logical negation function to the $S[k]$-th term of $E$, (modulo
256). Indeed, the function $f$ of equation (\ref{f0}) is defined by 
\begin{equation*}
\begin{array}{rccc}
f: & \llbracket1,256\rrbracket & \longrightarrow  & \llbracket1,256\rrbracket
\\ 
& (E[1],\hdots,E[256]) & \longmapsto  & (\overline{E[1]},\hdots,\overline{%
E[256]}).%
\end{array}%
\end{equation*}%
It is possible to apply the logical negation function several times the same
bit.

\bigskip

\noindent We finally split these 256 bits into blocks of 4 bits, this will
returns the hexadecimal value:

\bigskip 
\begin{alltt}
\noindent 63A88CB6AF0B18E3BE828F9BDA4596A6A13DFE38440AB9557DA1C0C6B1EDBDBD
\end{alltt}

\bigskip

As a comparison if instead of considering the text \textquotedblleft \textit{%
The original text}\textquotedblright\ we took \textquotedblleft \textit{the original text}\textquotedblright , the hash function returns:

\bigskip 
\begin{alltt}
\noindent 33E0DFB5BB1D88C924D2AF80B14FF5A7B1A3DEF9D0E831194BD814C8A3B948B3
\end{alltt}

\subsection{Example}

Consider the following message (a \emph{E. A. Poe}'s poem):
\begin{center}
\begin{lstlisting}
                    Wanderers in that happy valley,
                          Through two luminous windows, saw
                    Spirits moving musically,
                          To a lute's well-tuned law,
                    Round about a throne where, sitting
                          (Porphyrogene !)
                    In state his glory well befitting,
                          The ruler of the realm was seen.

                    And all with pearl and ruby glowing
                          Was the fair palace door,
                    Through which came flowing, flowing,
                          And sparkling evermore,
                    A troop of Echoes, whose sweet duty
                          Was but to sing,
                    In voices of surpassing beauty,
                          The wit and wisdom of their king.
\end{lstlisting}
\end{center}

\bigskip

Our hash function returns : 
\begin{alltt}
\noindent FF51DA4E7E50FBA7A8DC6858E9EC3353BDE2E465E1A6A1B03BEAA12A4AD694FB
\end{alltt}
\bigskip

If we put an additional space before ``       Was the fair palace door,'' the hash function
returns: 
\begin{alltt}
\noindent 03ABFA49B834D529669CFC1AEEC13E14EA5FFD2349582380BCBDBF8400017445
\end{alltt}
\bigskip

If we replace "Echoes" by "echoes" in the original text, the hash function
returns: 
\begin{alltt}
\noindent FE54777C52D373B7AED2EA5ACAD422B5B563BB3B91E8FCB48AAE9331DAC54A9B
\end{alltt}

\section{Quantitative measures}

\subsection{General definitions}

In the previous section we proved that discrete iterations produce a
topological chaos by checking two qualitative properties, namely
transitivity and regularity. This mathematical framework offers tools to
measure this chaos quantitatively.\newline
The first of this measures is the constant of sensitivity defined in
definition \ref{sensitivity}.\newline
Intuitively, a function $f$ has a constant of sensitivity equals to $\delta $
implies that there exist points aribitrarily close to any point $x$ which
eventually separate from $x$ by at least $\delta $ under some iterations of $%
f$.\newline
This induces that an arbitrarily small error on a the initial condition may
become magnified upon iterations of $f.$ (This is related to the famous 
\emph{butterfly effect}).

Other important tools are defined below.

\begin{definition}
A function $f$ is said to have the property of \emph{expansivity} if 
\begin{equation*}
\exists \varepsilon >0,\forall x\neq y,\exists n\in \mathbb{N}%
,d(f^{n}(x),f^{n}(y))\geqslant \varepsilon .
\end{equation*}

\noindent Then, $\varepsilon $ is the \emph{constant of expansivity }of $f.$ We also say $f$ is $\varepsilon$-expansive.
\end{definition}

\begin{remark}
A function $f$ has a constant of expansivity equals to $\varepsilon $ if an
arbitrary small error on any initial condition is amplified till $%
\varepsilon $.
\end{remark}

There exist other important quantitative tools such as topological entropy,
which quantifies the information contained at each iteration. But this is
not in the objective of this paper.\newline
We will reconsider this quantitative measures in the next subsection, in
relation with hash functions.

\subsection{Quantitative evaluation of our hash function}

Let $f_{0}$ be the vectorial logical negation previously used in our
algorithm. In this section, sensitivity and expansivity constants of $%
G_{f_{0}}$ will be calculated.

\subsubsection{Sensitivity}

We know that $(\mathcal{X},G_{f_{0}})$ has sensitive dependence on initial
conditions. Moreover, we have the following result.

\begin{theorem}
The constant of sensitivity of $(\mathcal{X},G_{f_{0}})$ is equal to $%
\mathsf{N}$.
\end{theorem}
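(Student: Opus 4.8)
The plan is to establish the sensitivity property directly for the constant $\delta=\mathsf{N}$: given any $X=(S,E)\in\mathcal{X}$ and any $\varepsilon>0$, I will construct a point $Y=(\check S,\check E)$ with $d(X,Y)<\varepsilon$ together with an integer $n$ such that $d\!\left(G_{f_0}^{\,n}(X),G_{f_0}^{\,n}(Y)\right)>\mathsf{N}$. Since the state component $d_e$ counts differing cells among the $\mathsf{N}$ cells and hence never exceeds $\mathsf{N}$, the separation contributed by the states is capped at $\mathsf{N}$; this is why $\mathsf{N}$ is the relevant constant, and the extra amount beyond it comes only from the residual strategy difference.

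First I would record how $G_{f_0}$ acts: $G_{f_0}(S,E)=(\sigma(S),F_{f_0}(i(S),E))$ shifts the strategy and negates exactly the $S[0]$-th cell of $E$. Consequently, after $n$ iterations the state is $E$ with each cell $c$ negated according to the parity of the number of occurrences of $c$ among $S[0],\dots,S[n-1]$. Now take $\check E=E$ and let $\check S$ agree with $S$ on its first $k$ terms, with $k$ chosen so large that any such $\check S$ satisfies $d_s(S,\check S)<\varepsilon$; then $d_e=0$ and $d(X,Y)<\varepsilon$. Because the two points share the initial state and the strategy prefix, $G_{f_0}^{\,k}(X)$ and $G_{f_0}^{\,k}(Y)$ have the same state, say $\mathcal{E}$.

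From time $k$ onward I design the free tail of $\check S$ to drive the two states apart completely. Over a window $k\leqslant m<n$ the two orbits differ precisely in the cells marked by $\bigoplus_{m=k}^{n-1}\!\left(e_{S[m]}\oplus e_{\check S[m]}\right)$, where $e_c$ denotes the $c$-th unit vector of $\mathds{B}^{\mathsf{N}}$. I want this to equal $(1,\dots,1)$, i.e. I need $\bigoplus_{m=k}^{n-1}e_{\check S[m]}=(1,\dots,1)\oplus\bigoplus_{m=k}^{n-1}e_{S[m]}=:Q$, where $Q$ is a fixed $0/1$ vector determined by $S$ and $n$. Choosing $n$ large enough and of the correct parity so that $n-k$ is at least the weight of $Q$ and congruent to it modulo $2$, I can realize $Q$ as an XOR of $n-k$ unit vectors and set $\check S[k],\dots,\check S[n-1]$ to the corresponding cell indices, spending surplus steps in pairs on a single cell to preserve parity. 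Then all $\mathsf{N}$ cells differ at time $n$, so the state part of $d\!\left(G_{f_0}^{\,n}(X),G_{f_0}^{\,n}(Y)\right)$ equals $\mathsf{N}$. Finally I make $\check S$ differ from $S$ at one index beyond $n$, which forces $d_s(\sigma^{n}S,\sigma^{n}\check S)>0$, so the total distance strictly exceeds $\mathsf{N}$, proving sensitivity with constant $\mathsf{N}$.

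The main obstacle I expect is the parity and counting bookkeeping in the state-separation step: matching the free tail of $\check S$ to the prescribed target vector $Q$ requires selecting $n$ with the right size and parity, realizing $Q$ as a sum of unit vectors, and verifying that the enumerated indices stay inside $\llbracket 1;\mathsf{N}\rrbracket$, all while keeping the two index conventions straight ($i(S)=S[0]$, whereas $d_s$ sums from index $1$, so $S[0]$ is invisible to the distance but still drives the first flip). The surrounding inequalities — $d_e\leqslant\mathsf{N}$ to cap the constant, and $d>\mathsf{N}$ from the leftover strategy divergence — are then routine.
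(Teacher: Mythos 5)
Your construction follows the same route as the paper's own proof --- copy a long prefix of $S$ so that $(\check S,E)$ lies in the given neighborhood, then spend the free tail of $\check S$ flipping cells until the two states disagree everywhere --- and in two respects you are more careful than the paper. First, you solve for the tail that is actually needed: your target $Q=(1,\dots,1)\oplus\bigoplus_{m=k}^{n-1}e_{S[m]}$ accounts for the flips that $S$ itself performs in the window, whereas the paper simply appends $1,2,\dots,\mathsf{N}$ to $S'$ and asserts full separation; that assertion is false in general (if $S$ itself happens to enumerate $1,\dots,\mathsf{N}$ in that window, the two orbits have \emph{identical} states at time $k+\mathsf{N}$). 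Second, you add a strategy discrepancy beyond time $n$ to obtain the strict inequality $d>\mathsf{N}$ demanded by Definition \ref{sensitivity}, where the paper only reaches $\geqslant\mathsf{N}$.

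However, the parity step you flagged as the expected obstacle is a genuine gap, and it cannot be repaired by choosing $n$: increasing $n$ by one flips the parity of $n-k$, but it also flips the parity of the weight of $Q$, since $Q$ absorbs one more unit vector $e_{S[n]}$. Concretely, the weight of $Q$ is congruent to $\mathsf{N}+(n-k)$ modulo $2$, while an XOR of $n-k$ unit vectors always has weight congruent to $n-k$ modulo $2$; the congruence you need is therefore equivalent to $\mathsf{N}\equiv 0 \pmod 2$, independently of $n$. For even $\mathsf{N}$ (in particular $\mathsf{N}=256$ in the paper's hash function) the parity condition holds automatically, the surplus steps can indeed be spent in pairs, and your proof goes through. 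For odd $\mathsf{N}$ no choice of tail can work, and in fact the theorem itself fails: when $\varepsilon<1$ the integer-valued $d_e$ forces $\check E=E$, and two orbits issued from the same state always differ in an \emph{even} number of cells at any common time (each orbit's cumulative flip vector has weight congruent to the number of elapsed iterations modulo $2$, so the weights' sum is even). Hence at most $\mathsf{N}-1$ cells can ever disagree, and since $d_s\leqslant(\mathsf{N}-1)/\mathsf{N}<1$, the separation never reaches $\mathsf{N}$. So your argument is correct exactly in the even case, and the parity obstruction you ran into is not bookkeeping: it is the reason the statement, as written for general $\mathsf{N}$, cannot be proved --- a defect the paper's looser argument silently passes over.
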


Recall that $\mathsf{N}=256$ in our hash function.

\begin{proof}
We have seen that sensitivity is a consequence of having Devaney's chaos
property. Let us determine its constant.\newline
Let $(S,E)$ be a point of $\mathcal{X}$, and $\delta >0$. Then, let us
define another point $(S^{\prime },E^{\prime })$ by:

\begin{itemize}
\item $E^{\prime}=E$,

\item The $k$-th first terms of $S^{\prime }$ are the same as those of $S$,
where $k=\lfloor log_{10}(\varepsilon )\rfloor +1$ such that 
\begin{equation*}
d((S,E);(S^{\prime },E^{\prime }))<\delta .
\end{equation*}

\item Then, we put the terms $1, 2, 3, \hdots, \mathsf{N}$ to $S^{\prime}$.

\item $S^{\prime }$ can be completed by any terms.
\end{itemize}

\noindent Then it can be found a point $(S^{\prime },E^{\prime })$ closed to 
$(S,E)$ $\left( d((S,E);(S^{\prime },E^{\prime }))<\delta \right) $, such
that states of $G_{f_{0}}^{k+\mathsf{N}}(S,E)$ and $G_{f_{0}}^{k+\mathsf{N}%
}(S^{\prime },E^{\prime })$ differ for each cell, so that the distance
between this two points is greater or equal to $\mathsf{N}$. This proves
that we have sensitive dependence on the original text and that the constant
of sensitivity is $N.$
\end{proof}

\subsubsection{Expansivity}

\begin{theorem}
$(\mathcal{X},G_{f_{0}})$ is an expansive chaotic system. Its constant of
expansivity is equal to 1.
\end{theorem}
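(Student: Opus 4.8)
The plan is to read off the dynamics of $G_{f_0}$ explicitly and then split the argument into a lower bound (showing that $\varepsilon=1$ is a legitimate expansivity constant) and an upper bound (showing that no larger constant works, so that $1$ is exactly the constant). Since $f_{0}$ is the vectorial negation, $F_{f_0}(k,E)$ is simply $E$ with its $k$-th cell toggled; composing this with the shift, an easy induction gives
\begin{equation*}
G_{f_0}^{n}(S,E)=\left(\sigma^{n}(S),\ E\text{ with cells }S[0],S[1],\dots,S[n-1]\text{ toggled}\right).
\end{equation*}
This closed form is the engine of the whole proof: the state component evolves only through the set of toggled cells, and toggling is an involution on each coordinate.

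For the lower bound I would take two distinct points $(S,E)\neq(\check S,\check E)$ and exhibit an iterate that separates them by at least $1$. If $E\neq\check E$, then already at $n=0$ we have $d\big((S,E);(\check S,\check E)\big)\geqslant d_{e}(E,\check E)\geqslant 1$, since $d_{e}$ is a nonnegative integer counting the cells where the two states differ. If instead $E=\check E$ but $S\neq\check S$, let $m$ be the least index with $S[m]\neq\check S[m]$. By the closed form, after $m$ iterations both states remain equal (the same cells $S[0],\dots,S[m-1]=\check S[0],\dots,\check S[m-1]$ have been toggled), whereas the $(m+1)$-st step toggles cell $S[m]$ on one side and the distinct cell $\check S[m]$ on the other. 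Hence the two states of $G_{f_0}^{m+1}$ differ in exactly those two cells, so $d_{e}=2$ and the distance is at least $2\geqslant 1$. Since these two cases exhaust all distinct pairs, $\varepsilon=1$ satisfies the definition of expansivity.

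The upper bound is where one must argue that the constant is exactly $1$ and not larger, and this is the step I expect to carry the main content. I would exhibit a pair of distinct points whose iterates never separate by more than $1$: take the same strategy $S=\check S$ and two states $E\neq E'$ that differ in exactly one cell $c$. Using the closed form, for every $n$ the two iterates share the strategy $\sigma^{n}(S)$ (so $d_{s}=0$), and their states arise from $E$ and $E'$ by toggling the very same cells $S[0],\dots,S[n-1]$; toggling identical cells preserves the fact that the states differ in cell $c$ alone, whence $d_{e}=1$ and $d\big(G_{f_0}^{n}(S,E);G_{f_0}^{n}(S,E')\big)=1$ for all $n$. Consequently, for any $\varepsilon>1$ no iterate of this pair ever reaches distance $\varepsilon$, so such an $\varepsilon$ cannot be an expansivity constant. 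Combined with the lower bound, this pins the constant of expansivity at exactly $1$. The only genuine subtlety is recognizing that determining the \emph{value} requires both directions — that $1$ works and that it is maximal — the maximality being witnessed by this trajectory that stays frozen at distance $1$ forever.
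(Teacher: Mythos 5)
Your proof is correct, and its lower-bound half follows exactly the paper's own argument: the same case split into $E\neq\check E$ (distance at least $1$ already at time $0$, since $d_e$ is a positive integer) and $E=\check E$, $S\neq\check S$ (at the first index $m$ where the strategies differ, the two trajectories toggle two distinct cells of a common state, giving $d_e=2$). Where you go beyond the paper is in two respects. First, you isolate the closed form $G_{f_0}^{n}(S,E)=\bigl(\sigma^{n}(S),\,E\text{ with cells }S[0],\dots,S[n-1]\text{ toggled}\bigr)$ as an explicit lemma; the paper uses this fact implicitly, and making it explicit is what lets you phrase both halves cleanly. Second, and more substantively, you actually prove the maximality of the constant: the paper's proof establishes only that $\varepsilon=1$ \emph{is} an expansivity constant, and then merely asserts, in an unproven remark after the proof, that $(\mathcal{X},G_{f_0})$ is not $A$-expansive for any $A>1$. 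Your witness for that remark — a common strategy $S$ with two states differing in a single cell $c$, whose iterates toggle identical cells and therefore remain at distance exactly $1$ forever — is precisely the right construction, and it is needed to justify the word ``equal'' in the theorem statement. So your proposal is not a different route so much as a completed version of the paper's route: same engine for the lower bound, plus the frozen-at-distance-one trajectory that the paper leaves to the reader.
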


\begin{proof}
If $(S,E)\neq (\check{S};\check{E})$, then:

\begin{itemize}
\item Either $E\neq \check{E}$, and then at least one cell is not in the
same state in $E$ and $\check{E}$. Then the distance between $(S,E)$ and $(%
\check{S};\check{E})$ is greater or equal to 1.

\item Or $E=\check{E}$. Then the strategies $S$ and $\check{S}$ are not
equal. Let $n_{0}$ be the first index in which the terms $S$ and $\check{S}$
differ. Then 
\begin{equation*}
\forall k<n_{0},G_{f_{0}}^{n_{0}}(S,E)=G_{f_{0}}^{k}(\check{S},\check{E}),
\end{equation*}%
and $G_{f_{0}}^{n_{0}}(S,E)\neq G_{f_{0}}^{n_{0}}(\check{S},\check{E})$,
then as $E=\check{E},$ the cell which has changed in $E$ at the $n_{0}$-th
iterate is not the same than the cell which has changed in $\check{E}$, so
the distance between $G_{f_{0}}^{n_{0}}(S,E)$ and $G_{f_{0}}^{n_{0}}(\check{S%
},\check{E})$ is greater or equal to 2.
\end{itemize}
\end{proof}

The property of expansivity is a kind of avalanche effect.

Remark that it can be easily proved that $(\mathcal{X},G_{f_0})$ is not $A$%
-expansive, for any $A > 1$.

\section{Discussion and future work}

We proved that discrete chaotic iterations are a particular case of
Devaney's topological chaos if the iteration function is topologically
transitive and that the set of topologically transitive functions is non
void.\newline
We applied our results to the generation of new hash functions. Even if we
used the vectorial boolean negation function, our procedure remains general
and other transitive functions can be used.\newline
By considering hash functions as an application of our theory, we have shown
how some desirable aspects in encryption such as unpredictability,
sensitivity to initial conditions, mixture and disorder can be
mathematically guaranteed and even quantified by mathematical tools.

Theory of chaos recalls us that simple functions can have, when iterated, a
very complex behaviour, while some complicated functions could have
foreseeable iterations. This is why it is important to have tools for
evaluating desired properties.\newline
In our example, we used a simple topologically transitive iteration
function, but it can be proved that there exist a lot of functions of this
kind. Our simple function may be replaced by other "chaotic" functions which
can be evaluated with the above described quantitative tools. Another
important parameter is the choice of the strategy S. We proposed a
particular strategy that can be easily improved by multiple ways.%
\newline
We do not claim to have proposed a hash function replacing well known ones,
we simply wished to show how our mathematical context allows to build such
functions and especially how important properties can be measured.

Much work remains to be made, for example we are convinced that the good
comprehension of the transitivity property, enables to study the problem of
collisions in hash functions. \newline
In future work we plan to investigate other forms of chaos such as Li-York
chaos \cite{Li75} and to explore other quantitative and qualitative tools
such as entropy (see e.g. \cite{Bowen}) and to enlarge the domain of
applications of our theoretical concepts.

\bibliographystyle{plain}
\bibliography{mabase.bib}

\begin{thebibliography}{1}

\bibitem{Bahi2000}
J.~M. Bahi.
\newblock Boolean totally asynchronous iterations.
\newblock {\em Int. Journal of Mathematical Algorithms}, 1:331--346, 2000.

\bibitem{Banks92}
J.~Banks, J.~Brooks, G.~Cairns, and P.~Stacey.
\newblock On devaney's definition of chaos.
\newblock {\em Amer. Math. Monthly}, 99:332--334, 1992.

\bibitem{Bowen}
R.~Bowen.
\newblock Entropy for group endomorphisms and homogeneous spaces.
\newblock {\em Trans. Amer. Math. Soc.}, 153:401--414, 1971.

\bibitem{Dev89}
R.~L. Devaney.
\newblock {\em An Introduction to Chaotic Dynamical Systems}.
\newblock Redwood City: Addison-Wesley, 2. edition, 1989.

\bibitem{Li75}
T.~Y. Li and J.~A. Yorke.
\newblock Period three implies chaos.
\newblock {\em Am. Math. Monthly}, 82(10):985--992, 1975.

\bibitem{Robert1986}
F.~Robert.
\newblock {\em Discrete Iterations: A Metric Study}, volume~6 of {\em Springer
  Series in Computational Mathematics}.
\newblock 1986.

\end{thebibliography}



@ARTICLE{Adler65,
  author = {R. L. Adler and A. G. Konheim and M. H. McAndrew},
  title = {Topological entropy},
  journal = {Trans. Amer. Math. Soc.},
  year = {1965},
  volume = {114},
  pages = {309-319},
  owner = {guyeux},
  timestamp = {2008.05.29}
}

@ARTICLE{Bahi2000,
  author = {J. M. Bahi},
  title = {Boolean totally asynchronous iterations},
  journal = {Int. Journal of Mathematical Algorithms},
  year = {2000},
  volume = {1},
  pages = {331--346},
  owner = {guyeux},
  timestamp = {17/02/2008}
}

@ARTICLE{Bahi1999,
  author = {J. M. Bahi},
  title = {Parallel synchronous chaotic iterations for singular linear systems},
  journal = {Parallel Algorithms and Applications},
  year = {1999},
  volume = {14},
  pages = {19--35},
  owner = {guyeux},
  timestamp = {17/02/2008}
}

@ARTICLE{Banks92,
  author = {J. Banks and J. Brooks and G. Cairns and P. Stacey},
  title = {On Devaney's Definition of Chaos},
  journal = {Amer. Math. Monthly},
  year = {1992},
  volume = {99},
  pages = {332--334},
  __markedentry = {[guyeux]},
  keywords = {(c+),},
  owner = {guyeux},
  timestamp = {27/01/2008}
}

@UNPUBLISHED{F.a,
  author = {F. Blanchard and E. Glasner and S. Kolyada and A.Maass},
  title = {On Li-Yorke pairs},
  note = {Preprint},
  optkey = {2000},
  owner = {guyeux},
  timestamp = {2008.01.02}
}

@UNPUBLISHED{F.,
  author = {F. Blanchard and B. Host and S. Ruette},
  title = {Asymptotic pairs in positive-entropy systems},
  note = {Preprint},
  optannote = {to appear in Ergod. Th. Dynam. Sys.},
  optkey = {2000},
  owner = {guyeux},
  timestamp = {2008.01.02}
}

@ARTICLE{Bowen,
  author = {R. Bowen},
  title = {Entropy for group endomorphisms and homogeneous spaces},
  journal = {Trans. Amer. Math. Soc.},
  year = {1971},
  volume = {153},
  pages = {401-414},
  owner = {guyeux},
  timestamp = {15/02/2008}
}

@ARTICLE{Bowen1971,
  author = {R. Bowen},
  title = {Periodic points and measures for Axiom A diffeomorphisms},
  journal = {Trans. Amer. Math. Soc.},
  year = {1971},
  volume = {154},
  pages = {377-397},
  owner = {guyeux},
  timestamp = {15/02/2008}
}

@BOOK{Chambert,
  title = {Analyse, exercices. Agrégation de mathématiques},
  publisher = {Dunod},
  author = {Chambert, Loir,},
  optpages = {129},
  optvolume = {1},
  owner = {guyeux},
  timestamp = {2008.01.02}
}

@BOOK{Chambert-Loir,
  title = {Exercices d'analyse pour l'agrégation},
  publisher = {Dunod},
  author = {Chambert-Loir and Fermigier},
  optpages = {203},
  optvolume = {2},
  owner = {guyeux},
  timestamp = {2008.01.02}
}

@BOOK{Chambert-Loira,
  title = {Exercices d'analyse pour l'agrégation},
  publisher = {Dunod},
  author = {Chambert-Loir and Fermigier},
  optpages = {133},
  optvolume = {1},
  owner = {guyeux},
  timestamp = {2008.01.02}
}

@BOOK{Chambert-Loirb,
  title = {Exercices d'analyse pour l'agrégation},
  publisher = {Dunod},
  author = {Chambert-Loir, Fermigier},
  optpages = {180},
  optvolume = {1},
  owner = {guyeux},
  timestamp = {2008.01.02}
}

@ARTICLE{Chazan69,
  author = {D. Chazan and W. Miranker},
  title = {Chaotic relaxation},
  journal = {Linear algebra and its applications},
  year = {1969},
  pages = {199-222},
  owner = {guyeux},
  timestamp = {2008.05.22}
}

@ARTICLE{Cong2006,
  author = {Jin Cong and Yan Jiang and Zhiguo Qu and Zhongmei Zhang},
  title = {A Wavelet Packets Watermarking Algorithm Based on Chaos Encryption},
  journal = {LNCS},
  year = {2006},
  volume = {3980},
  pages = {921-928},
  owner = {guyeux},
  timestamp = {2008.05.05}
}

@ARTICLE{Conley1978,
  author = {C. Conley},
  title = {Isolated invariant sets and the Morse index},
  journal = {CBMS Regional Conference. AMS.},
  year = {1978},
  optvolume = {38},
  owner = {guyeux},
  timestamp = {2008.01.02}
}

@ARTICLE{Cornu2002,
  author = {Philippe Cornu and André Smolarz},
  title = {Image characterization by texture},
  journal = {Traitement du signal},
  year = {2002},
  volume = {19},
  pages = {29-35},
  file = {:/home/guyeux/Documents/Bibliotheque/These/Watermarking/Caractérisation
	d'images par textures associées.pdf:PDF},
  owner = {guyeux},
  timestamp = {2008.05.12}
}

@ARTICLE{Dawei2004,
  author = {Zhao Dawei and Chen Guanrong and Liu Wenbo},
  title = {A chaos-based robust wavelet-domain watermarking algorithm},
  journal = {Chaos, Solitons and Fractals},
  year = {2004},
  volume = {22},
  pages = {47-54},
  owner = {guyeux},
  timestamp = {2008.10.07}
}

@BOOK{R1989,
  title = {An Introduction to Chaotic Dynamical System},
  publisher = {Addison-Wesley},
  year = {1989},
  author = {R. Devaney},
  optpages = {196},
  owner = {guyeux},
  timestamp = {2008.01.02}
}

@BOOK{Dev89,
  title = {An Introduction to Chaotic Dynamical Systems},
  publisher = {Redwood City: Addison-Wesley},
  year = {1989},
  author = {R. L. Devaney},
  edition = {2.},
  owner = {guyeux},
  timestamp = {27/01/2008}
}

@BOOK{Dudley,
  title = {Real analysis and probability},
  publisher = {Cambridge University Press},
  author = {Richard M. Dudley},
  optpages = {209},
  owner = {guyeux},
  timestamp = {2008.01.02}
}

@BOOK{Feigenbaum1908,
  title = {Universal behavior in nonlinear systems},
  publisher = {Los Alamos Science},
  year = {1908},
  author = {Feigenbaum},
  owner = {guyeux},
  timestamp = {2008.01.02}
}

@PHDTHESIS{Formenti2003,
  author = {Enrico Formenti},
  title = {De l'algorithmique du chaos dans les systèmes dynamiques discrets},
  school = {Université de Provence},
  year = {2003},
  owner = {guyeux},
  timestamp = {2008.01.02}
}

@PHDTHESIS{Formenti1998,
  author = {Enrico Formenti},
  title = {Automates cellulaires et chaos : de la vision topologique à la vision
	algorithmique},
  school = {\'Ecole Normale Supérieure de Lyon},
  year = {1998},
  optmonth = {Octobre},
  owner = {guyeux},
  timestamp = {2008.01.02}
}

@MISC{Frisch,
  author = {Alain Frisch},
  title = {Entropie topologique et définition du chaos},
  year = {1998},
  note = {Rapport de tipe},
  file = {Tipe_Entropie Topologique et Definition du Chaos (Alain Frisch).ps:Chaos/Tipe_Entropie
	Topologique et Definition du Chaos (Alain Frisch).ps:PDF},
  keywords = {Chaos, Entropie Topologique, Tipe},
  owner = {guyeux},
  timestamp = {2008.01.02},
  url = {http://alain.frisch.fr/math.html}
}

@BOOK{J1992,
  title = {Introduction to Chaos and Coherence},
  publisher = {IOP Publishing},
  year = {1992},
  author = {J. Frøyland},
  owner = {guyeux},
  timestamp = {2008.01.02}
}

@BOOK{Gostiaux,
  title = {Cours de mathématiques spéciales},
  publisher = {Presses Universitaires de France},
  author = {Bernard Gostiaux},
  optpages = {374},
  optvolume = {2},
  owner = {guyeux},
  timestamp = {2008.01.02}
}

@CONFERENCE{Gotsman88,
  author = {C. Gotsman and D. Lehmann and E. Shamir},
  title = {Asynchronous Dynamics of Random Boolean Networks},
  booktitle = {San Diego '88 Neural Networks Conference},
  year = {1988},
  owner = {guyeux},
  timestamp = {30/03/2008}
}

@BOOK{Gourdon,
  title = {Les maths en tête (Maths pour M') : Analyse},
  publisher = {Ellipse},
  author = {Xavier Gourdon},
  optpages = {34},
  owner = {guyeux},
  timestamp = {2008.01.02}
}

@BOOK{Gourdona,
  title = {Les maths en tête (Maths pour M') : Analyse},
  publisher = {Ellipse},
  author = {Xavier Gourdon},
  optpages = {21},
  owner = {guyeux},
  timestamp = {2008.01.02}
}

@BOOK{Gourdon2000,
  title = {Les maths en tête (Maths pour M') : Analyse},
  publisher = {Ellipse},
  year = {2000},
  author = {Xavier Gourdon},
  optpages = {196},
  owner = {guyeux},
  timestamp = {2008.01.02}
}

@ARTICLE{Guckenheimer1979,
  author = {J. Guckenheimer},
  title = {Sensitive dependence to initial conditions for one-dimensional maps},
  journal = {Comm. Math. Phys.},
  year = {1979},
  optpages = {133-160},
  owner = {guyeux},
  timestamp = {2008.01.02}
}

@BOOK{J.1983,
  title = {Nonlinear oscillations, dynamical systems, and bifurcations of vector
	fields},
  publisher = {Springer Verlag},
  year = {1983},
  author = {J. Guckenheimer and P. Holmes},
  owner = {guyeux},
  timestamp = {2008.01.02}
}

@BOOK{R1994,
  title = {A First Course in Discrete Dynamical Systems},
  publisher = {Springer-Verlag},
  year = {1994},
  author = {R. Homlgren},
  optpages = {106},
  owner = {guyeux},
  timestamp = {2008.01.02}
}

@UNPUBLISHED{W.,
  author = {W. Huang and X. Ye},
  title = {Devaney's chaos or 2-scattering implies Li-Yorke's chaos},
  note = {Preprint},
  owner = {guyeux},
  timestamp = {2008.01.02}
}

@ARTICLE{W.2001,
  author = {W. Huang and X. Ye},
  title = {Homeomorphisms with the whole compacta being scrambled sets},
  journal = {Ergod. Th. Dynam. Systems},
  year = {2001},
  optnumber = {1},
  optpages = {77-91},
  optvolume = {21},
  owner = {guyeux},
  timestamp = {2008.01.02}
}

@ARTICLE{Hurley1982,
  author = {M. Hurley},
  title = {Attractors, persistence and density of their bassin},
  journal = {Transactions of AMS},
  year = {1982},
  optpages = {247-271},
  optvolume = {269},
  owner = {guyeux},
  timestamp = {2008.01.02}
}

@ARTICLE{Knudsen94,
  author = {Knudsen},
  title = {Chaos Without Nonperiodicity},
  journal = {AMM: The American Mathematical Monthly},
  year = {1994},
  volume = {101},
  owner = {guyeux},
  timestamp = {27/01/2008}
}

@ARTICLE{Knudsen1994,
  author = {C. Knudsen},
  title = {Chaos without nonperiodicity},
  journal = {AMS},
  year = {1994},
  optnumber = {101},
  optpages = {563-565},
  owner = {guyeux},
  timestamp = {2008.01.02}
}

@PHDTHESIS{Knudsen1994a,
  author = {C. Knudsen},
  title = {Aspects of noninvertible dynamics and chaos},
  school = {Technical University of Denmark},
  year = {1994},
  owner = {guyeux},
  timestamp = {2008.01.02}
}

@BOOK{Lehning1997,
  title = {Analyse en dimension finie},
  publisher = {Masson},
  year = {1997},
  author = {Lehning},
  optpages = {28},
  optvolume = {4},
  owner = {guyeux},
  timestamp = {2008.01.02}
}

@ARTICLE{Li75,
  author = {T. Y. Li and J. A. Yorke},
  title = {Period three implies chaos},
  journal = {Am. Math. Monthly},
  year = {1975},
  volume = {82},
  pages = {985--992},
  number = {10},
  owner = {guyeux},
  timestamp = {27/01/2008}
}

@ARTICLE{Liu2007bis,
  author = {Shao-Hui Liu and Hong-Xun Yao and Wen Gao and Yong-Liang Liu},
  title = {An image fragile watermark scheme based on chaotic image pattern
	and pixel-pairs},
  journal = {Applied Mathematics and Computation},
  year = {2007},
  volume = {185},
  pages = {869-882},
  owner = {guyeux},
  timestamp = {2008.10.07}
}

@ARTICLE{Liu2007,
  author = {Zhen Liu and Lifeng Xi},
  title = {Image Information Hiding Encryption Using Chaotic Sequence},
  journal = {LNAI},
  year = {2007},
  volume = {4693},
  pages = {202-208},
  owner = {guyeux},
  timestamp = {11/04/2008}
}

@BOOK{Meunier,
  title = {Problèmes de maths spé M'},
  author = {Meunier},
  optpages = {273},
  owner = {guyeux},
  timestamp = {2008.01.02}
}

@ARTICLE{Miellou75,
  author = {J.-C. Miellou},
  title = {Algorithmes de relaxation chaotique \`{a} retards},
  journal = {Rairo},
  year = {1975},
  volume = {R1},
  pages = {148-162},
  owner = {guyeux},
  timestamp = {2008.05.22}
}

@PHDTHESIS{Pellegrin1986,
  author = {D. Pellegrin},
  title = {Algorithmique discr\`{e}te et r\'{e}seaux d'automates},
  school = {Grenoble},
  year = {1986},
  owner = {guyeux},
  timestamp = {2008.05.22}
}

@BOOK{Ramis,
  title = {Cours de mathématiques spéciales},
  publisher = {Masson},
  author = {Ramis and Deschamps and Odoux},
  optpages = {64},
  owner = {guyeux},
  timestamp = {2008.01.02}
}

@ARTICLE{Raynal2001,
  author = {Frédéric Raynal and Fabien A. P. Petitcolas and Caroline Fontaine},
  title = {Evaluation automatique des méthodes de tatouage},
  journal = {Traitement du signal},
  year = {2001},
  volume = {18},
  pages = {271-282},
  owner = {guyeux},
  timestamp = {2008.05.21}
}

@BOOK{Robert1986,
  title = {Discrete Iterations: A Metric Study},
  year = {1986},
  editor = {Springer-Verlag},
  author = {F. Robert},
  volume = {6},
  series = {Springer Series in Computational Mathematics},
  owner = {guyeux},
  timestamp = {17/02/2008}
}

@BOOK{Robert1995,
  title = {Les Systèmes Dynamiques Discrets},
  publisher = {Springer},
  year = {1995},
  author = {F. Robert},
  owner = {guyeux},
  timestamp = {2008.01.02}
}

@BOOK{Rouviere1999,
  title = {Petit Guide de calcul différentiel},
  publisher = {Cassini},
  year = {1999},
  author = {François Rouviere},
  optpages = {135},
  owner = {guyeux},
  timestamp = {2008.01.02}
}

@BOOK{Rouviere1999a,
  title = {Petit Guide de calcul différentiel},
  publisher = {Cassini},
  year = {1999},
  author = {François Rouviere},
  owner = {guyeux},
  timestamp = {2008.01.02}
}

@PHDTHESIS{Ruette2001,
  author = {Sylvie Ruette},
  title = {Chaos en dynamique topologique, en particulier sur l'intervalle,
	mesures d'entropie maximale},
  school = {Université d'Aix-Marseille II},
  year = {2001},
  optmonth = {Novembre},
  owner = {guyeux},
  timestamp = {2008.01.02}
}

@ARTICLE{Schmitz2001,
  author = {Roland Schmitz},
  title = {Use of chaotic dynamical systems in cryptography},
  journal = {Journal of Franklin Institute},
  year = {2001},
  volume = {338},
  pages = {429-441},
  owner = {guyeux},
  timestamp = {17/02/2008}
}

@BOOK{Stewart1989,
  title = {Does God Play Dices ?: the mathematics of chaos},
  publisher = {Penguin},
  year = {1989},
  author = {Ian Stewart},
  owner = {guyeux},
  timestamp = {2008.01.02}
}

@ARTICLE{Wu2007bis,
  author = {Xianyong Wu and Zhi-Hong Guan},
  title = {A Novel Digital Watermark Algorithm Based on Chaotic Maps},
  journal = {Physical Letters A},
  year = {2007},
  volume = {365},
  pages = {403-406²²},
  owner = {guyeux},
  timestamp = {2008.05.20}
}

@ARTICLE{Wu2007,
  author = {Xianyong Wu and Zhi-Hong Guan and Zhengping Wu},
  title = {A Chaos Based Robust Spatial Domain Watermarking Algorithm},
  journal = {LNCS},
  year = {2007},
  volume = {4492},
  pages = {113-119},
  file = {:/home/guyeux/Documents/Bibliotheque/These/Watermarking/A Chaos Based
	Robust Spatial Domain Watermarking Algorithm.pdf:PDF},
  owner = {guyeux},
  timestamp = {29/03/2008}
}

@ARTICLE{Zhao2004,
  author = {Jian Zhao and Mingquan Zhou and Hongmei Xie and Jinye Peng and Xin
	Zhou},
  title = {A Novel Wavelet Image Watermarking Scheme Combined with Chaos Sequence
	and Neural Network},
  journal = {LNCS},
  year = {2004},
  volume = {3174},
  pages = {663-668},
  file = {:/home/guyeux/Documents/Bibliotheque/These/Watermarking/A Novel Wavelet
	Image Watermarking Scheme Combined with Chaos Sequence and Neural
	Network.pdf:PDF},
  owner = {guyeux},
  timestamp = {11/04/2008}
}

@ARTICLE{Zhu2006,
  author = {Congxu Zhu and Xuefeng Liao and Li Zhihua},
  title = {Chaos-Based Multipurpose Image Watermarking Algorithm},
  journal = {Wuhan University Journal of Natural Sciences},
  year = {2006},
  volume = {11},
  pages = {1675-1678},
  file = {:/home/guyeux/Documents/Bibliotheque/These/Watermarking/Chaos-Based
	Multipurpose Image Watermarking Algorithm.pdf:PDF},
  owner = {guyeux},
  timestamp = {2008.05.10}
}

@comment{jabref-meta: selector_publisher:}

@comment{jabref-meta: selector_author:}

@comment{jabref-meta: selector_journal:}

@comment{jabref-meta: selector_keywords:Chaos;Entropie Topologique;Tip
e;}

\end{document}